\date{\relax}
\author[*]{Ruediger Ehlers}
\author[**]{Salar Moarref}
\author[***]{Ufuk Topcu}
\affil[*]{University of Bremen and DFKI GmbH, Bremen, Germany}
\affil[**]{University of Pennsylvania, Philadelphia, PA, USA}
\affil[***]{University of Texas at Austin, TX, USA}
\newtheorem{definition}{Definition}
\newtheorem{corollary}{Corollary}
\newtheorem{lemma}{Lemma}
\newtheorem{example}{Example}
\newtheorem{remark}{Remark}
\newcommand{\newterm}[1]{\emph{#1}}
\newcommand{\NN}{\mathbb{N}}
\newcommand{\TRUE}{\mathbf{true}}
\title{Risk-Averse $\omega$-regular Markov Decision Process Control\footnote{This is the ArXiV/CoRR preprint of the paper that was accepted  under the title ``{Risk-Averse Control of Markov Decision Processes with $\omega$-regular Objectives}'' at CDC 2016. The original publication is available on \href{http://dx.doi.org/10.1109/CDC.2016.7798306}{IEEE XPlore}.
\textcopyright{} 2016 IEEE. Personal use of this material is permitted. Permission from IEEE must be
obtained for all other uses, in any current or future media, including
reprinting/republishing this material for advertising or promotional purposes, creating new
collective works, for resale or redistribution to servers or lists, or reuse of any copyrighted
component of this work in other works.
}
}
\begin{document}
\maketitle

\begin{abstract}
Many control problems in environments that can be modeled as Markov decision processes (MDPs) concern infinite-time horizon specifications.
The classical aim in this context is to compute a control policy that maximizes the probability of satisfying the specification.
In many scenarios, there is however a non-zero probability of failure in every step of the system's execution. For infinite-time horizon specifications, this implies that the specification is violated with probability 1 in the long run no matter what policy is chosen, which prevents previous policy computation methods from being useful in these scenarios.

In this paper, we introduce a new optimization criterion for MDP policies that captures the task of working towards the satisfaction of some infinite-time horizon $\omega$-regular specification. The new criterion is applicable to MDPs in which the violation of the specification cannot be avoided in the long run. We give an algorithm to compute policies that are optimal in this criterion and show that it captures the ideas of \emph{optimism} and \emph{risk-averseness} in MDP control: while the computed policies are optimistic in that a MDP run enters a failure state relatively late, they are risk-averse by always maximizing the probability to reach their respective next goal state. We give results on two robot control scenarios to validate the usability of risk-averse MDP policies.
\end{abstract}

\section{Introduction}

The class of $\omega$-regular specifications allows to concisely capture long-term tasks for systems to be controlled.
Consequently, they have not only been used as specification formalism for the control of deterministic systems, but found applications in control of probabilistic systems. 
In the probabilistic case, the objective is typically to ensure that the specification holds almost surely or with the highest possible probability.

There are however many systems that do not admit any control strategy that satisfies an $\omega$-regular objective with a non-zero probability. In such a case, all controllers are equally bad: they violate the specification almost surely (or surely). 
If, for example, we have  a robot control scenario where there is always a small probability that the robot moves towards a wall (due to external influences), then a specification that forbids colliding with the wall cannot be fulfilled with a non-zero probability, as colliding with a wall almost surely eventually happens. 
Yet, researchers have proposed many approaches for controlling robots in such environments in practice. In a nutshell, these approaches are \emph{optimistic}: why should we be intimidated by events that are unavoidable but occur with small probability even in long time spans when we can still satisfy the specification for some time? 
{Such approaches are typically also \emph{risk-averse}: within the actions that are available to the robot, those that avoid the violation of the specification as long as possible are preferred.}
A reasonable strategy for the robot could, for example, try to stay clear of the walls and immediately take action when it happens to get closer to the wall at runtime. 
In this way, the robot could work towards satisfying its goals even though in the long run, it will eventually collide with a wall almost surely. 

On a theoretical level, the infinite-horizon nature of $\omega$-regular specifications however prevents the immediate application of \emph{optimism}, though. If with probability $1$, the specification is violated no matter what policy is used for controlling the system, then all control policies are equally bad and no best policy can be generated. While this fact advocates for an approach to system control that is not based on $\omega$-regular objectives, the infinitary nature of them allows to abstract from many details of the specification. As an example, we can state in the $\omega$-regular setting that the robot should visit each of two regions in a workspace infinitely often, which is a concise representation of the task of patrolling between these regions. 
The specification does not impose maximal times between visits to the regions, {which allows to optimize the risk-averseness of the policy}. %
Deviating from this concept would mean to impose time bounds between the visits to the regions. But then we get a tradeoff between optimizing for satisfying the specification as long as possible and the lengths of the patrolling periods. 
{
So it is desirable to keep the simplicity and conciseness of $\omega$-regular specifications to allow optimizing the probability to satisfy the specification for at least some time.

In this paper, we show how to compute \emph{optimistic}, yet \emph{risk-averse} policies for satisfying $\omega$-regular objectives in Markov decision processes (MDPs). We define an optimization criterion that captures the task of computing policies that satisfy $\omega$-regular control objective as long as possible, and give an algorithm to compute these policies.
The basic idea is that we require the policy to have a labeling that describes which states are considered to be \emph{goal states} by the policy, i.e., for which visiting them infinitely often ensures that the specification is satisfied. An \emph{optimally risk-averse policy} maximizes the probability for reaching the next goal state from the respective previous goal state. We argue that this criterion matches the intuitive idea that the controller should satisfy the specification as long as possible even if violation is almost surely unavoidable in the long run. %
We validate the usability of our risk-averse policy definition and the scalability of our policy computation algorithm on two case studies for robot control in probabilistic environments.
}

\section{Related Work}

MDPs are widely used in many areas such as engineering, economics and biology, and 
have been successfully used to model and control autonomous robots with uncertainty in their sensing and actuation (see e.g.,  \cite{ding2014optimal,lahijanian2012temporal,temizer2010collision,alterovitz2007stochastic}). In these domains, the behavior of the system cannot be predicted with certainty, but it can be modeled probabilistically through simulations or empirical trials. 
{Our results in this paper can be used in \emph{practical} settings in which the system cannot be controlled to satisfy a specification in the long run, but some amount of risk-taking is acceptable. }

MDPs are also referred to as $1\frac{1}{2}$-player games and belong to a broader class of \emph{stochastic games}.  
The algorithmic study of stochastic games with respect to $\omega$-regular objectives has recently attracted significant attention \cite{condon1992complexity,de1998concurrent,de2000concurrent,de2001quantitative,chatterjee2005complexity,chatterjee2006complexity}. 
See \cite{chatterjee2012survey} for a detailed survey. 
The central question about a game is whether a player has a strategy for winning the game. 
There are several definitions for \emph{winning} in a stochastic game \cite{chatterjee2012survey}. 
For example, one may ask if a player has a strategy that ensures a winning outcome of the game, no matter how the other player chooses her actions (\emph{sure winning}), or one may ask if a player has a strategy that achieves a winning outcome with probability $1$ (\emph{almost-sure winning}). 
In contrast to these \emph{qualitative} winning criteria, the \emph{quantitative} solution 
\cite{chatterjee2004quantitative,de2001quantitative} amounts to computing the value of the game, i.e., the maximal probability of winning that a player can guarantee against any strategy chosen by the opponent. 
{The choice of MDPs in this paper is motivated by their manageable complexity compared to more general classes of stochastic games, and by their applicability to many control problems. 
Ding et al.~\cite{DingEtAlMDPControl} gave an approach to compute MDP policies that maximize the probability of satisfying an $\omega$-regular specification. They applied their algorithm to robot indoor navigation. Svorenova et al.~\cite{DBLP:conf/cdc/SvorenovaCB13} considered the problem of minimizing the expected cost in between reaching \emph{goal states} in MDPs for $\omega$-regular specifications. Our work uses a similar notion of goal states.
}
None of the mentioned works consider the syn\-the\-sis of risk-averse policies in case there is no strategy that wins with a probability of greater than $0$.

\section{Preliminaries}

\paragraph{MDPs}
A \newterm{Markov decision process} is defined as a tuple $\mathcal{M} = (S,A,\Sigma,P,L,s_0)$, where $S$ is a finite set of \newterm{states}, $A$ is a finite set of \newterm{actions}, $\Sigma$ is the \newterm{label alphabet}, $P : S \times A \rightarrow \mathcal{P}(S) \cup \{\bot\}$ is the \newterm{transition function}, where $\mathcal{P}(S)$ denotes the probability distributions over $S$, $L : S \rightarrow \Sigma$ is the labeling function of $\mathcal{M}$, and $s_0 \in S$ is the initial state of the Markov chain. We say that some finite sequence $\pi = \pi_0 \ldots \pi_n \in S^*$ is a \newterm{finite trace} (or \newterm{run}) of $\mathcal{M}$ if there exists a sequence of actions $\rho = \rho_0 \ldots \rho_{n-1} \in A^*$ such that for all $i \in \{0, \ldots, n-1\}$, we have $P(\pi_i,\rho_i) \neq \bot$ and $P(\pi_i,\rho_i)(\pi_{i+1}) > 0$. We say that the combined probability of $(\pi,\rho)$ is  $\prod_{i=0}^{n-1} P(\pi_i,\rho_i)(\pi_{i+1})$. The definition of finite traces carries over to infinite traces.

A \newterm{Markov chain} is a Markov decision process (MDP) in which $A = \{\cdot\}$. A Markov chain introduces the usual probability measure over sets of infinite traces.

A \newterm{policy} for an MDP is a function $f : S^* \rightarrow \mathcal{P}(A)$ such that for all $s_0 \ldots s_n \in S^*$, we have $f(s_0 \ldots s_n)(a) = 0$ for all actions $a$ such that $P(s_n,a) = \bot$. A policy induces an infinite-state Markov chain $\mathcal{C}' = (S',\{\cdot\},\Sigma,P',L',s_0)$ with $S' = S^*$, $L'(t_0 \ldots t_n) = L(t_n)$ for all $t_0 \ldots t_n \in S'$, and for all $t_0 \ldots t_n, u_0 \ldots u_m \in S'$, we have $P'(t_0 \ldots t_n,\cdot)(u_0 \ldots u_m) = \sum_{a \in A} P(t_n,a) \cdot f(t_0 \ldots t_n)(a)$ if $u_0 \ldots u_{m-1} = t_0 \ldots t_n$, and $P'(t_0 \ldots t_n, \cdot)(u_0 \ldots u_m) = 0$ otherwise.

Policies for MDPs can be \newterm{positional} or \newterm{finite-state}. For a positional policy, for all $\pi = \pi_0 \ldots \pi_n \in S^*$ and $\pi' = \pi'_0 \ldots \pi'_m \in S^*$, we have that $f(\pi)=f(\pi')$ if $\pi_n = \pi'_m$. For a finite-state strategy, there exists a finite-state automaton $\mathcal{F} = (Q,S,\delta,q_0)$ with $Q$ being a finite set of states, $q_0 \in Q$, and $\delta : Q \times S \rightarrow Q$ such that there is a function $f' : Q \rightarrow \mathcal{P}(A)$ such that for all $\pi = \pi_0 \ldots \pi_n \in S^*$, we have that $f(\pi) = f'(q)$ for $q = \delta(\ldots \delta(\delta(q_0,\pi_0),\pi_1), \ldots, \pi_n)$.

{In literature, MDPs often also have a \emph{reward function}. As in some other work on $\omega$-regular MDP control \cite{DingEtAlMDPControl}, we do not need it in this paper and have thus omitted the reward function in the MDP definition.}
An MDP can be represented graphically by drawing the states as nodes in a graph, marking the initial state and letting the transitions be represented by groups of \newterm{edges}, which are in turn labeled by their transition probabilities. The groups of edges are labeled by their actions. Disallowed actions, i.e., for which we have $P(s,a)=\bot$, are not shown. %

\paragraph{Parity automata and $\omega$-specifications}
Given an alphabet $\Sigma$, an $\omega$-regular specification is a subset of $\Sigma^\omega$ that is representable as the language of a \newterm{deterministic parity word automaton}. These automata are defined as tuples $\mathcal{A} = (Q,\Sigma,\delta,q_0,C)$, where $Q$ is a finite set of states, $\Sigma$ is an alphabet, $\delta : Q \times \Sigma \rightarrow Q$ is the transition function of $\mathcal{A}$, $q_0 \in Q$ is the initial state of the automaton, and $C : Q \rightarrow \NN$ is the \newterm{coloring function}. Given a word $w = w_0 w_1 \ldots \in \Sigma^\omega$, $\mathcal{A}$ induces a \newterm{trace} $\pi = \pi_0 \pi_1 \ldots \in Q^\omega$ such that for all $i \in \mathbb{N}$, we have $\pi_{i+1} = \delta(\pi_i,w_i)$. Let $\mathsf{inf}$ be a function that maps an infinite sequence onto the elements of the sequence that occur infinitely often in it. A trace $\pi$ of $\mathcal{A}$ is called \newterm{accepting} if $\max(\mathsf{inf}( c(\pi_0) c(\pi_1) c(\pi_2) \ldots))$ is even. An automaton is said to accept a word $w$ if there exists an accepting trace for it. The set of all words accepted by the automaton is called its \newterm{language}.

\paragraph{Reachability MDPs}
A reachability MDP $\mathcal{M} = (S,A,\Sigma,P,L,s_0,g)$ consists of the usual MDP elements plus a function $g : S \rightarrow \{0,1\}$, which assigns to every state $s \in S$ either $0$ or $1$ depending on whether it is a \newterm{goal state} or not. A policy $f$ for $\mathcal{M}$ induces for every state $s$ a \newterm{value} $v(s) \in [0,1]$ that states the probability measure of the traces starting in $s$ and visiting a state $s' \in S$ with $g(s')=1$ when executing the policy, i.e., in the Markov chain induced by $\mathcal{M}$ and $f$ starting from state $s$. A policy that maximizes the values from all starting states is called \newterm{optimal} and it is known that in reachability MDPs, positional optimal policies exist \cite{condon1992complexity}. The values of the states induced by an optimal policy are also called the \newterm{state values} of the reachability MDP. These can be computed either by \newterm{policy iteration} or \newterm{value iteration} algorithms \cite{Sigaud:2010:MDP:1841781}.
In the latter case, a sequence of vectors $\vec x_1, \vec x_2, \ldots \in [0,1]^{|S|}$ is computed such that for every $i \in \NN$, $x_{i+1}$ is closer to the vector of state values than $x_i$. Value iteration is normally programmed to abort computation if at some point, $||x_{i+1} - x_i|| \leq \epsilon$ for some value $\epsilon$ and some norm $||\cdot||$. When starting with $\vec x_0$ being equivalent to $g$, the approximations are all under-approximations of the actual state values (modulo rounding errors).

\section{Problem Definition}

\begin{definition}[Parity MDP]
The product of an MDP $\mathcal{M} = (S,A,\Sigma,P,L,s_0)$ and a parity word automaton $\mathcal{A} = (Q,\Sigma,\delta,q_0,C)$ is  an MDP $\mathcal{M}' = (S', A, \Sigma, P', C', s'_0)$ with a coloring function instead of a labeling function where:
\begin{align*}
S' & = S \times Q, \\
s'_0 & = (s_0,q_0), \\
C'(s,q) & = C(q) \text{ for all } (s,q) \in S', \text{and} \\
P'((s,q),a)((s',q')) & = \begin{cases} P'(s,a)(s') & \text{if } q' = \delta(q,L(s')) \\
0 & \text{else} \end{cases} \\
& \quad \text{for all } (s,q), (s',q') \in S', a \in A.
\end{align*}
An infinite trace $\pi_0 \pi_1 \ldots \in S'^\omega$ in $\mathcal{M}'$ is said to be \newterm{accepting} if the highest number occurring infinitely often in the sequence $C'(\pi_0) C'(\pi_1) \ldots$ is even.
\end{definition}

{
A parity MDP captures a control problem in a probabilistic environment. We say that some trace $\pi = \pi_0 \pi_1 \ldots$ (or run) of the MDP is \newterm{accepting} if the trace fulfills the parity acceptance condition defined in the $C$ component in the MDP. Let us consider an example.

}

\begin{example}
\label{example:motivation}
As a first example, we consider a simple robot with unicycle dynamics in a two-dimensional gridded world. The workspace, which we depict in Figure~\ref
{fig:simpleRobot}, has 70$\times$40 cells and the robot always has one out of eight possible current directions. The speed of the robot is constant, and it needs to avoid hitting the workspace boundaries or the static obstacles. In order to model the scenario as an MDP, we use a semantics with a fixed time step. We shift the current cell into the current direction of travel by 2 cells, extend the resulting rectangle by $0.1$ into every direction to account for imprecise motion, and then assign transition probabilities that are proportional to the overlap of the rectangle with the world cells. There is an additional special error state in the MDP that represents crashes. In every step of the execution, the policy can decide to increase or decrease the current direction by $1$ step (out of $8$). This turning operation may fail with a probability of $0.2$ - in the case of failure, the direction of the robot is not changed. The MDP has $70 \cdot 40 \cdot 8 + 1 = 22401$ states, $67201$ state/action pairs, and $681591$ \newterm{edges}, i.e., pairs $(s,a,s')$ in the MDP $\mathcal{M} = (S,A,\Sigma,P,L,s_0)$ with $P(s,a)(s')>0$.

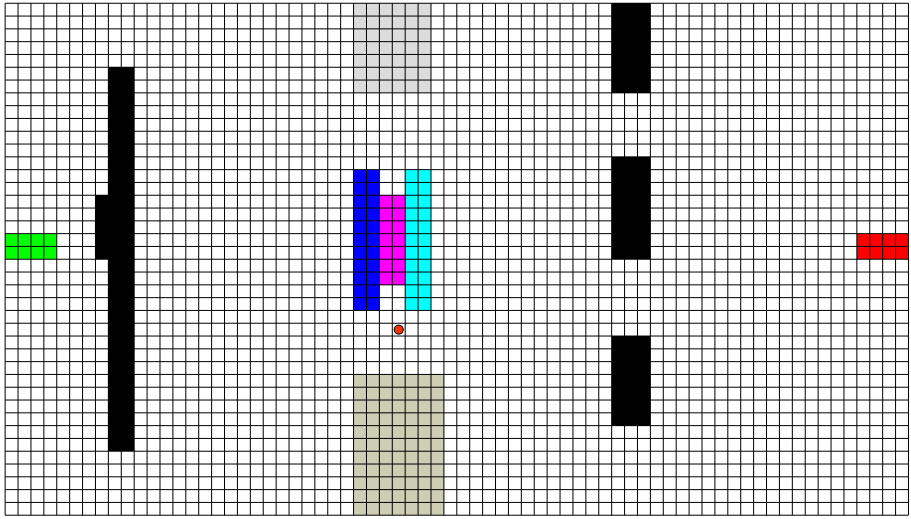
\begin{figure}
\centering\resizebox{0.7\columnwidth}{!}{%
\begin{tikzpicture}[yscale=-1]
\definecolor{mycolor0}{RGB}{255,255,255}
\definecolor{mycolor1}{RGB}{0,0,0}
\definecolor{mycolor2}{RGB}{0,255,0}
\definecolor{mycolor3}{RGB}{255,0,1}
\definecolor{mycolor4}{RGB}{0,0,255}
\definecolor{mycolor5}{RGB}{255,0,255}
\definecolor{mycolor6}{RGB}{0,255,255}
\definecolor{mycolor7}{RGB}{220,220,220}
\definecolor{mycolor8}{RGB}{205,206,179}
\path[fill=mycolor0] (0,0) rectangle (27,1);
\path[fill=mycolor7] (27,0) rectangle (33,1);
\path[fill=mycolor0] (33,0) rectangle (47,1);
\path[fill=mycolor1] (47,0) rectangle (50,1);
\path[fill=mycolor0] (50,0) rectangle (70,1);
\path[fill=mycolor0] (0,1) rectangle (27,2);
\path[fill=mycolor7] (27,1) rectangle (33,2);
\path[fill=mycolor0] (33,1) rectangle (47,2);
\path[fill=mycolor1] (47,1) rectangle (50,2);
\path[fill=mycolor0] (50,1) rectangle (70,2);
\path[fill=mycolor0] (0,2) rectangle (27,3);
\path[fill=mycolor7] (27,2) rectangle (33,3);
\path[fill=mycolor0] (33,2) rectangle (47,3);
\path[fill=mycolor1] (47,2) rectangle (50,3);
\path[fill=mycolor0] (50,2) rectangle (70,3);
\path[fill=mycolor0] (0,3) rectangle (27,4);
\path[fill=mycolor7] (27,3) rectangle (33,4);
\path[fill=mycolor0] (33,3) rectangle (47,4);
\path[fill=mycolor1] (47,3) rectangle (50,4);
\path[fill=mycolor0] (50,3) rectangle (70,4);
\path[fill=mycolor0] (0,4) rectangle (27,5);
\path[fill=mycolor7] (27,4) rectangle (33,5);
\path[fill=mycolor0] (33,4) rectangle (47,5);
\path[fill=mycolor1] (47,4) rectangle (50,5);
\path[fill=mycolor0] (50,4) rectangle (70,5);
\path[fill=mycolor0] (0,5) rectangle (8,6);
\path[fill=mycolor1] (8,5) rectangle (10,6);
\path[fill=mycolor0] (10,5) rectangle (27,6);
\path[fill=mycolor7] (27,5) rectangle (33,6);
\path[fill=mycolor0] (33,5) rectangle (47,6);
\path[fill=mycolor1] (47,5) rectangle (50,6);
\path[fill=mycolor0] (50,5) rectangle (70,6);
\path[fill=mycolor0] (0,6) rectangle (8,7);
\path[fill=mycolor1] (8,6) rectangle (10,7);
\path[fill=mycolor0] (10,6) rectangle (27,7);
\path[fill=mycolor7] (27,6) rectangle (33,7);
\path[fill=mycolor0] (33,6) rectangle (47,7);
\path[fill=mycolor1] (47,6) rectangle (50,7);
\path[fill=mycolor0] (50,6) rectangle (70,7);
\path[fill=mycolor0] (0,7) rectangle (8,8);
\path[fill=mycolor1] (8,7) rectangle (10,8);
\path[fill=mycolor0] (10,7) rectangle (70,8);
\path[fill=mycolor0] (0,8) rectangle (8,9);
\path[fill=mycolor1] (8,8) rectangle (10,9);
\path[fill=mycolor0] (10,8) rectangle (70,9);
\path[fill=mycolor0] (0,9) rectangle (8,10);
\path[fill=mycolor1] (8,9) rectangle (10,10);
\path[fill=mycolor0] (10,9) rectangle (70,10);
\path[fill=mycolor0] (0,10) rectangle (8,11);
\path[fill=mycolor1] (8,10) rectangle (10,11);
\path[fill=mycolor0] (10,10) rectangle (70,11);
\path[fill=mycolor0] (0,11) rectangle (8,12);
\path[fill=mycolor1] (8,11) rectangle (10,12);
\path[fill=mycolor0] (10,11) rectangle (70,12);
\path[fill=mycolor0] (0,12) rectangle (8,13);
\path[fill=mycolor1] (8,12) rectangle (10,13);
\path[fill=mycolor0] (10,12) rectangle (47,13);
\path[fill=mycolor1] (47,12) rectangle (50,13);
\path[fill=mycolor0] (50,12) rectangle (70,13);
\path[fill=mycolor0] (0,13) rectangle (8,14);
\path[fill=mycolor1] (8,13) rectangle (10,14);
\path[fill=mycolor0] (10,13) rectangle (27,14);
\path[fill=mycolor4] (27,13) rectangle (29,14);
\path[fill=mycolor0] (29,13) rectangle (31,14);
\path[fill=mycolor6] (31,13) rectangle (33,14);
\path[fill=mycolor0] (33,13) rectangle (47,14);
\path[fill=mycolor1] (47,13) rectangle (50,14);
\path[fill=mycolor0] (50,13) rectangle (70,14);
\path[fill=mycolor0] (0,14) rectangle (8,15);
\path[fill=mycolor1] (8,14) rectangle (10,15);
\path[fill=mycolor0] (10,14) rectangle (27,15);
\path[fill=mycolor4] (27,14) rectangle (29,15);
\path[fill=mycolor0] (29,14) rectangle (31,15);
\path[fill=mycolor6] (31,14) rectangle (33,15);
\path[fill=mycolor0] (33,14) rectangle (47,15);
\path[fill=mycolor1] (47,14) rectangle (50,15);
\path[fill=mycolor0] (50,14) rectangle (70,15);
\path[fill=mycolor0] (0,15) rectangle (7,16);
\path[fill=mycolor1] (7,15) rectangle (10,16);
\path[fill=mycolor0] (10,15) rectangle (27,16);
\path[fill=mycolor4] (27,15) rectangle (29,16);
\path[fill=mycolor5] (29,15) rectangle (31,16);
\path[fill=mycolor6] (31,15) rectangle (33,16);
\path[fill=mycolor0] (33,15) rectangle (47,16);
\path[fill=mycolor1] (47,15) rectangle (50,16);
\path[fill=mycolor0] (50,15) rectangle (70,16);
\path[fill=mycolor0] (0,16) rectangle (7,17);
\path[fill=mycolor1] (7,16) rectangle (10,17);
\path[fill=mycolor0] (10,16) rectangle (27,17);
\path[fill=mycolor4] (27,16) rectangle (29,17);
\path[fill=mycolor5] (29,16) rectangle (31,17);
\path[fill=mycolor6] (31,16) rectangle (33,17);
\path[fill=mycolor0] (33,16) rectangle (47,17);
\path[fill=mycolor1] (47,16) rectangle (50,17);
\path[fill=mycolor0] (50,16) rectangle (70,17);
\path[fill=mycolor0] (0,17) rectangle (7,18);
\path[fill=mycolor1] (7,17) rectangle (10,18);
\path[fill=mycolor0] (10,17) rectangle (27,18);
\path[fill=mycolor4] (27,17) rectangle (29,18);
\path[fill=mycolor5] (29,17) rectangle (31,18);
\path[fill=mycolor6] (31,17) rectangle (33,18);
\path[fill=mycolor0] (33,17) rectangle (47,18);
\path[fill=mycolor1] (47,17) rectangle (50,18);
\path[fill=mycolor0] (50,17) rectangle (70,18);
\path[fill=mycolor2] (0,18) rectangle (4,19);
\path[fill=mycolor0] (4,18) rectangle (7,19);
\path[fill=mycolor1] (7,18) rectangle (10,19);
\path[fill=mycolor0] (10,18) rectangle (27,19);
\path[fill=mycolor4] (27,18) rectangle (29,19);
\path[fill=mycolor5] (29,18) rectangle (31,19);
\path[fill=mycolor6] (31,18) rectangle (33,19);
\path[fill=mycolor0] (33,18) rectangle (47,19);
\path[fill=mycolor1] (47,18) rectangle (50,19);
\path[fill=mycolor0] (50,18) rectangle (66,19);
\path[fill=mycolor3] (66,18) rectangle (70,19);
\path[fill=mycolor2] (0,19) rectangle (4,20);
\path[fill=mycolor0] (4,19) rectangle (7,20);
\path[fill=mycolor1] (7,19) rectangle (10,20);
\path[fill=mycolor0] (10,19) rectangle (27,20);
\path[fill=mycolor4] (27,19) rectangle (29,20);
\path[fill=mycolor5] (29,19) rectangle (31,20);
\path[fill=mycolor6] (31,19) rectangle (33,20);
\path[fill=mycolor0] (33,19) rectangle (47,20);
\path[fill=mycolor1] (47,19) rectangle (50,20);
\path[fill=mycolor0] (50,19) rectangle (66,20);
\path[fill=mycolor3] (66,19) rectangle (70,20);
\path[fill=mycolor0] (0,20) rectangle (8,21);
\path[fill=mycolor1] (8,20) rectangle (10,21);
\path[fill=mycolor0] (10,20) rectangle (27,21);
\path[fill=mycolor4] (27,20) rectangle (29,21);
\path[fill=mycolor5] (29,20) rectangle (31,21);
\path[fill=mycolor6] (31,20) rectangle (33,21);
\path[fill=mycolor0] (33,20) rectangle (70,21);
\path[fill=mycolor0] (0,21) rectangle (8,22);
\path[fill=mycolor1] (8,21) rectangle (10,22);
\path[fill=mycolor0] (10,21) rectangle (27,22);
\path[fill=mycolor4] (27,21) rectangle (29,22);
\path[fill=mycolor5] (29,21) rectangle (31,22);
\path[fill=mycolor6] (31,21) rectangle (33,22);
\path[fill=mycolor0] (33,21) rectangle (70,22);
\path[fill=mycolor0] (0,22) rectangle (8,23);
\path[fill=mycolor1] (8,22) rectangle (10,23);
\path[fill=mycolor0] (10,22) rectangle (27,23);
\path[fill=mycolor4] (27,22) rectangle (29,23);
\path[fill=mycolor0] (29,22) rectangle (31,23);
\path[fill=mycolor6] (31,22) rectangle (33,23);
\path[fill=mycolor0] (33,22) rectangle (70,23);
\path[fill=mycolor0] (0,23) rectangle (8,24);
\path[fill=mycolor1] (8,23) rectangle (10,24);
\path[fill=mycolor0] (10,23) rectangle (27,24);
\path[fill=mycolor4] (27,23) rectangle (29,24);
\path[fill=mycolor0] (29,23) rectangle (31,24);
\path[fill=mycolor6] (31,23) rectangle (33,24);
\path[fill=mycolor0] (33,23) rectangle (70,24);
\path[fill=mycolor0] (0,24) rectangle (8,25);
\path[fill=mycolor1] (8,24) rectangle (10,25);
\path[fill=mycolor0] (10,24) rectangle (70,25);
\path[fill=mycolor0] (0,25) rectangle (8,26);
\path[fill=mycolor1] (8,25) rectangle (10,26);
\path[fill=mycolor0] (10,25) rectangle (70,26);
\path[fill=mycolor0] (0,26) rectangle (8,27);
\path[fill=mycolor1] (8,26) rectangle (10,27);
\path[fill=mycolor0] (10,26) rectangle (47,27);
\path[fill=mycolor1] (47,26) rectangle (50,27);
\path[fill=mycolor0] (50,26) rectangle (70,27);
\path[fill=mycolor0] (0,27) rectangle (8,28);
\path[fill=mycolor1] (8,27) rectangle (10,28);
\path[fill=mycolor0] (10,27) rectangle (47,28);
\path[fill=mycolor1] (47,27) rectangle (50,28);
\path[fill=mycolor0] (50,27) rectangle (70,28);
\path[fill=mycolor0] (0,28) rectangle (8,29);
\path[fill=mycolor1] (8,28) rectangle (10,29);
\path[fill=mycolor0] (10,28) rectangle (47,29);
\path[fill=mycolor1] (47,28) rectangle (50,29);
\path[fill=mycolor0] (50,28) rectangle (70,29);
\path[fill=mycolor0] (0,29) rectangle (8,30);
\path[fill=mycolor1] (8,29) rectangle (10,30);
\path[fill=mycolor0] (10,29) rectangle (27,30);
\path[fill=mycolor8] (27,29) rectangle (34,30);
\path[fill=mycolor0] (34,29) rectangle (47,30);
\path[fill=mycolor1] (47,29) rectangle (50,30);
\path[fill=mycolor0] (50,29) rectangle (70,30);
\path[fill=mycolor0] (0,30) rectangle (8,31);
\path[fill=mycolor1] (8,30) rectangle (10,31);
\path[fill=mycolor0] (10,30) rectangle (27,31);
\path[fill=mycolor8] (27,30) rectangle (34,31);
\path[fill=mycolor0] (34,30) rectangle (47,31);
\path[fill=mycolor1] (47,30) rectangle (50,31);
\path[fill=mycolor0] (50,30) rectangle (70,31);
\path[fill=mycolor0] (0,31) rectangle (8,32);
\path[fill=mycolor1] (8,31) rectangle (10,32);
\path[fill=mycolor0] (10,31) rectangle (27,32);
\path[fill=mycolor8] (27,31) rectangle (34,32);
\path[fill=mycolor0] (34,31) rectangle (47,32);
\path[fill=mycolor1] (47,31) rectangle (50,32);
\path[fill=mycolor0] (50,31) rectangle (70,32);
\path[fill=mycolor0] (0,32) rectangle (8,33);
\path[fill=mycolor1] (8,32) rectangle (10,33);
\path[fill=mycolor0] (10,32) rectangle (27,33);
\path[fill=mycolor8] (27,32) rectangle (34,33);
\path[fill=mycolor0] (34,32) rectangle (47,33);
\path[fill=mycolor1] (47,32) rectangle (50,33);
\path[fill=mycolor0] (50,32) rectangle (70,33);
\path[fill=mycolor0] (0,33) rectangle (8,34);
\path[fill=mycolor1] (8,33) rectangle (10,34);
\path[fill=mycolor0] (10,33) rectangle (27,34);
\path[fill=mycolor8] (27,33) rectangle (34,34);
\path[fill=mycolor0] (34,33) rectangle (70,34);
\path[fill=mycolor0] (0,34) rectangle (8,35);
\path[fill=mycolor1] (8,34) rectangle (10,35);
\path[fill=mycolor0] (10,34) rectangle (27,35);
\path[fill=mycolor8] (27,34) rectangle (34,35);
\path[fill=mycolor0] (34,34) rectangle (70,35);
\path[fill=mycolor0] (0,35) rectangle (27,36);
\path[fill=mycolor8] (27,35) rectangle (34,36);
\path[fill=mycolor0] (34,35) rectangle (70,36);
\path[fill=mycolor0] (0,36) rectangle (27,37);
\path[fill=mycolor8] (27,36) rectangle (34,37);
\path[fill=mycolor0] (34,36) rectangle (70,37);
\path[fill=mycolor0] (0,37) rectangle (27,38);
\path[fill=mycolor8] (27,37) rectangle (34,38);
\path[fill=mycolor0] (34,37) rectangle (70,38);
\path[fill=mycolor0] (0,38) rectangle (27,39);
\path[fill=mycolor8] (27,38) rectangle (34,39);
\path[fill=mycolor0] (34,38) rectangle (70,39);
\path[fill=mycolor0] (0,39) rectangle (27,40);
\path[fill=mycolor8] (27,39) rectangle (34,40);
\path[fill=mycolor0] (34,39) rectangle (70,40);
\draw ( 0,0) -- ( 0 , 40 );
\draw ( 1,0) -- ( 1 , 40 );
\draw ( 2,0) -- ( 2 , 40 );
\draw ( 3,0) -- ( 3 , 40 );
\draw ( 4,0) -- ( 4 , 40 );
\draw ( 5,0) -- ( 5 , 40 );
\draw ( 6,0) -- ( 6 , 40 );
\draw ( 7,0) -- ( 7 , 40 );
\draw ( 8,0) -- ( 8 , 40 );
\draw ( 9,0) -- ( 9 , 40 );
\draw ( 10,0) -- ( 10 , 40 );
\draw ( 11,0) -- ( 11 , 40 );
\draw ( 12,0) -- ( 12 , 40 );
\draw ( 13,0) -- ( 13 , 40 );
\draw ( 14,0) -- ( 14 , 40 );
\draw ( 15,0) -- ( 15 , 40 );
\draw ( 16,0) -- ( 16 , 40 );
\draw ( 17,0) -- ( 17 , 40 );
\draw ( 18,0) -- ( 18 , 40 );
\draw ( 19,0) -- ( 19 , 40 );
\draw ( 20,0) -- ( 20 , 40 );
\draw ( 21,0) -- ( 21 , 40 );
\draw ( 22,0) -- ( 22 , 40 );
\draw ( 23,0) -- ( 23 , 40 );
\draw ( 24,0) -- ( 24 , 40 );
\draw ( 25,0) -- ( 25 , 40 );
\draw ( 26,0) -- ( 26 , 40 );
\draw ( 27,0) -- ( 27 , 40 );
\draw ( 28,0) -- ( 28 , 40 );
\draw ( 29,0) -- ( 29 , 40 );
\draw ( 30,0) -- ( 30 , 40 );
\draw ( 31,0) -- ( 31 , 40 );
\draw ( 32,0) -- ( 32 , 40 );
\draw ( 33,0) -- ( 33 , 40 );
\draw ( 34,0) -- ( 34 , 40 );
\draw ( 35,0) -- ( 35 , 40 );
\draw ( 36,0) -- ( 36 , 40 );
\draw ( 37,0) -- ( 37 , 40 );
\draw ( 38,0) -- ( 38 , 40 );
\draw ( 39,0) -- ( 39 , 40 );
\draw ( 40,0) -- ( 40 , 40 );
\draw ( 41,0) -- ( 41 , 40 );
\draw ( 42,0) -- ( 42 , 40 );
\draw ( 43,0) -- ( 43 , 40 );
\draw ( 44,0) -- ( 44 , 40 );
\draw ( 45,0) -- ( 45 , 40 );
\draw ( 46,0) -- ( 46 , 40 );
\draw ( 47,0) -- ( 47 , 40 );
\draw ( 48,0) -- ( 48 , 40 );
\draw ( 49,0) -- ( 49 , 40 );
\draw ( 50,0) -- ( 50 , 40 );
\draw ( 51,0) -- ( 51 , 40 );
\draw ( 52,0) -- ( 52 , 40 );
\draw ( 53,0) -- ( 53 , 40 );
\draw ( 54,0) -- ( 54 , 40 );
\draw ( 55,0) -- ( 55 , 40 );
\draw ( 56,0) -- ( 56 , 40 );
\draw ( 57,0) -- ( 57 , 40 );
\draw ( 58,0) -- ( 58 , 40 );
\draw ( 59,0) -- ( 59 , 40 );
\draw ( 60,0) -- ( 60 , 40 );
\draw ( 61,0) -- ( 61 , 40 );
\draw ( 62,0) -- ( 62 , 40 );
\draw ( 63,0) -- ( 63 , 40 );
\draw ( 64,0) -- ( 64 , 40 );
\draw ( 65,0) -- ( 65 , 40 );
\draw ( 66,0) -- ( 66 , 40 );
\draw ( 67,0) -- ( 67 , 40 );
\draw ( 68,0) -- ( 68 , 40 );
\draw ( 69,0) -- ( 69 , 40 );
\draw ( 70,0) -- ( 70 , 40 );
\draw (0, 0) -- ( 70 , 0 );
\draw (0, 1) -- ( 70 , 1 );
\draw (0, 2) -- ( 70 , 2 );
\draw (0, 3) -- ( 70 , 3 );
\draw (0, 4) -- ( 70 , 4 );
\draw (0, 5) -- ( 70 , 5 );
\draw (0, 6) -- ( 70 , 6 );
\draw (0, 7) -- ( 70 , 7 );
\draw (0, 8) -- ( 70 , 8 );
\draw (0, 9) -- ( 70 , 9 );
\draw (0, 10) -- ( 70 , 10 );
\draw (0, 11) -- ( 70 , 11 );
\draw (0, 12) -- ( 70 , 12 );
\draw (0, 13) -- ( 70 , 13 );
\draw (0, 14) -- ( 70 , 14 );
\draw (0, 15) -- ( 70 , 15 );
\draw (0, 16) -- ( 70 , 16 );
\draw (0, 17) -- ( 70 , 17 );
\draw (0, 18) -- ( 70 , 18 );
\draw (0, 19) -- ( 70 , 19 );
\draw (0, 20) -- ( 70 , 20 );
\draw (0, 21) -- ( 70 , 21 );
\draw (0, 22) -- ( 70 , 22 );
\draw (0, 23) -- ( 70 , 23 );
\draw (0, 24) -- ( 70 , 24 );
\draw (0, 25) -- ( 70 , 25 );
\draw (0, 26) -- ( 70 , 26 );
\draw (0, 27) -- ( 70 , 27 );
\draw (0, 28) -- ( 70 , 28 );
\draw (0, 29) -- ( 70 , 29 );
\draw (0, 30) -- ( 70 , 30 );
\draw (0, 31) -- ( 70 , 31 );
\draw (0, 32) -- ( 70 , 32 );
\draw (0, 33) -- ( 70 , 33 );
\draw (0, 34) -- ( 70 , 34 );
\draw (0, 35) -- ( 70 , 35 );
\draw (0, 36) -- ( 70 , 36 );
\draw (0, 37) -- ( 70 , 37 );
\draw (0, 38) -- ( 70 , 38 );
\draw (0, 39) -- ( 70 , 39 );
\draw (0, 40) -- ( 70 , 40 );
\draw[fill=red!80!yellow,very thick] (30.5,25.5) circle (0.35cm);
\end{tikzpicture}
}
\caption{Workspace for the single-robot example.}
\label{fig:simpleRobot}
\end{figure}

The specification for the robot is represented as a 15 state parity automaton. It encodes four conditions to hold:
\begin{itemize}
\item The left-most marked part of the workspace should be visited infinitely often,
\item the right-most marked part of the workspace should be visited infinitely often,
\item either the top marked part of the workspace must be visited only finitely often, or the bottom one, or both, and 
\item infinitely often, the regions in the middle shall be visited strictly in the middle-left-right order. 
\end{itemize}
The product MDP of the MDP and the parity automaton has 366015 states, out of which
2196 are unreachable (and can be removed).
\end{example}

{
A classical problem over MDPs with $\omega$-regular optimization criteria is to find a policy that maximizes the probability that a trace is accepting. In the product MDP from Example~\ref{example:motivation}, there is however no policy that raises this probability above $0$. This follows from the fact that no matter what the policy does, with a probability of at least $0.2$, the robot continues to travel into the current direction. By the limited size of the workspace, colliding with the workspace boundaries takes at most 35 steps, and thus, a very conservative lower bound on the probability for a crash within 35 steps is $(0.2)^{35}$ at \emph{every} step of the MDPs execution. In the long run, the collision is thus unavoidable with probability $1$.

Despite the fact that the parity MDP does not admit a good policy in the traditional sense, we may want to compute a policy that works towards the satisfaction of the specification as long as possible while avoiding unnecessary \emph{risks}. We formalize this objective in the following definition:
}

\begin{definition}
\label{def:mainProblem}
Let $\mathcal{M} = (S,A,\Sigma,P,C,s_0)$ be a parity MDP. We say that some control policy $f : S^* \rightarrow A$ has a \newterm{risk-averseness probability} $p \in [0,1]$ if there exist labelings $l : S^* \rightarrow \mathbb{N}$ and $l': S^* \rightarrow \mathbb{B}$ and a Markov chain $\mathcal{C}'$ induced by $\mathcal{M}$ and $f$ with the following properties:
\begin{itemize}
\item There exists some number $k \in \NN$ such that for all $t_0 t_1 t_2 \ldots \in S^\omega$, there are at most $k$ many indices $i \in \NN$ for which we have $l(t_0 \ldots t_i) > l(t_0 \ldots t_i t_{i+1})$.
\item For all $t_0 t_1 \ldots t_n \in S^*$, we have that $l(t_0 \ldots t_n)$ is even, and $l'(t_0 \ldots t_n)=\TRUE$ implies that $C(t_n) \geq l(t_0 \ldots t_n)$ and that $C(t_n)$ is even.
\item For all $t_0 t_1 \ldots t_n \in S^*$, if $C(t_n)$ is odd, then $l(t_0 \ldots t_n) > C(t_n)$.
\item For all  $t = t_0 t_1 \ldots t_n \in S^*$ with either (a) $l'(t) = \TRUE$ or (b) $t = s_0$, the probability measure in $\mathcal{C}'$ to reach some state $t\,t'_0 \ldots t'_m \in S^*$ with $l'(t\,t'_0 \ldots t'_m)=\TRUE$ from state $t$ is at least $p$.
\end{itemize}
\end{definition}
The labellings $l$ and $l'$ in Definition~\ref{def:mainProblem} augment a policy with the information what \newterm{goal color} the policy is trying to reach and when a goal has been reached. A goal must always be even-colored, but along different traces, different goals are allowed. 
From every goal state, the next goal state must be reached with probability at least $p$. Together with the first two requirements in Definition~\ref{def:mainProblem}, this implements the parity acceptance condition, as they together state that the {goal color} can only decrease finitely often along a trace. 
The parity acceptance condition does not need to be fulfilled with strictly positive probability in the long run, however, as in between two visits to goal states, the policy may fail with probability $(1-p)$. Thus, we only require the parity acceptance condition to hold on those paths on which goal states are reached infinitely often (which may have probability measure $0$).
{The strategy can choose goal states in a way that maximizes the probability of reaching the respective next goal state. Thus, the higher the value of $p$ is, the more averse to the risk to miss the next goal the control policy needs to be.}

The reader may wonder why mentioning the labeling function $l'$ is actually necessary in Definition~\ref{def:mainProblem}, as one could simply implicitly set $l'(t_0 \ldots t_n) = \TRUE$ whenever $C(t_n) \geq l(t_0 \ldots t_n)$ and $C(t_n)$ is even. However, this change requires the policy to be able to reach the next goal from state $t_n$ with probability $p$ in the induced Markov chain, which is not always possible in a $p$-risk-averse strategy. Figure~\ref{fig:nonWinningExample} shows an example in which increasing the color of state $q_1$ to $2$ (which is even) would reduce the maximally implementable risk-averseness level from $0.68$ to $0.64$. As changing an odd color to an even one only makes the parity acceptance condition easier to satisfy, this is a very unintuitive property. To avoid it, we thus chose to make the labeling function $l'$ explicit. 

\begin{figure}
\centering
\begin{tikzpicture}[auto,node distance=8mm,>=latex,font=\scriptsize,myrotate/.style 2 args={rotate=#1,nodes={rotate=#2}}]
\tikzstyle{round}=[thick,draw=black,circle]
\tikzset{every loop/.style={min distance=10mm,in=0,out=60,looseness=10}}

\path[use as bounding box] (-0.87,-0.95) rectangle (7.13,1.15);

\draw[fill=black] (-0.8,0.0) circle (0.07cm);
\node[round] (s0) at (0,0) {$0$};
\draw[thick,->] (-0.8,0.0) -- (s0);
\node[round] (s1) at (2,0.7) {{\color{white}$1$}};
\node at (2,0.7) {$q_1\!\!:\!\!1$};
\node[round] (s2) at (2,-0.7) {$1$};
\node[round] (s3) at (4,0) {$1$};
\node[round] (s4) at (6,0) {$2$};

\draw[thick,->] (s0) edge node[above] {0.2} coordinate[pos=0.1](sA1C1) (s1);
\draw[thick,->] (s0) edge node[above] {0.8} coordinate[pos=0.1](sA1C2) (s2);
\draw[thick,fill=red,draw] pic[draw,angle radius=5mm,"a",angle eccentricity=1.2] {angle = sA1C2--s0--sA1C1};

\draw[thick,->] (s1) edge node[below left=-1mm] {0.8} coordinate[pos=0.1](sA1C2) (s3);
\draw[thick,->] (s1) edge[bend left=20] node[above] {0.2} coordinate[pos=0.1](sA1C1) (s4);
\draw[thick,fill=red,draw] pic[draw,angle radius=5mm,"a",angle eccentricity=1.2] {angle = sA1C2--s1--sA1C1};

\draw[thick,->] (s2) edge node[above left=-1mm] {0.2} coordinate[pos=0.1](sA1C2) (s3);
\draw[thick,->] (s2) edge[bend right=20] node[above] {0.8} coordinate[pos=0.1](sA1C1) (s4);
\draw[thick,fill=red,draw] pic[draw,angle radius=5mm,"a",angle eccentricity=1.2] {angle = sA1C1--s2--sA1C2};

\draw[thick,->] (s3) edge[loop,out=-20,in=20,distance=0.6cm] node[right] {1.0} (s3);
\draw[thick,->] (s4) edge[loop,out=-20,in=20,distance=0.6cm] node[right] {1.0} (s4);

\end{tikzpicture}
\caption{An MDP in which for risk-averseness level $p=0.68$, state $q_1$ is not winning, but the state is reachable on the (unique) $p$-risk-averse policy. All states are labeled by their colors. }
\label{fig:nonWinningExample}
\end{figure}
{Using Definition~\ref{def:mainProblem}, we can now state the main problem considered in this paper:}
\begin{definition}[Optimal risk-averse policy synthesis]
Given a parity MDP, the optimal risk-averse policy synthesis is to find the highest value $p$ such that a policy for the MDP with risk-averseness level $p$ exists, and to find such a policy.
\label{def:optimalRiskAversePolicyComputation}
\end{definition}

\section{Computing Risk-Averse Policies}
In this section, we describe an algorithm to compute risk-averse policies in parity MDPs. The algorithm produces finite-memory strategies that are not necessarily positional. This may appear to be a flaw of the algorithm, as memoryless policies suffice for maximizing the probability for a trace to satisfy a parity objective in MDPs \cite{chatterjee2004quantitative}. 
However, optimal risk-averse strategies \emph{do} require memory in general, which we show by means of an example.

\begin{figure}
\centering
\begin{tikzpicture}[auto,node distance=8mm,>=latex,font=\scriptsize,myrotate/.style 2 args={rotate=#1,nodes={rotate=#2}}]
\tikzstyle{round}=[thick,draw=black,circle]
\tikzset{every loop/.style={min distance=10mm,in=0,out=60,looseness=10}}

\path[use as bounding box] (-4.1,-4.25) rectangle (4.5,4.05);

\node[round] (s0) at (0,0) {$3$};

\begin{scope}[myrotate={-18}{-18}]
\node[round] (sA1) at (2,0) {$0$};
\node[round] (sB1) at (4,0) {$1$};
\draw[thick,->] (s0) -- node[above] {0.9} (sA1);
\draw[thick,->] (s0) edge[bend right=20] node[below] {0.1} coordinate[pos=0.1](sB1C) (sB1);
\draw[thick,fill=red,draw] pic[draw,angle radius=10mm,"a",angle eccentricity=1.2] {angle = sB1C--s0--sA1};
\draw[thick,->] (sB1) edge[loop,out=-20,in=20,distance=0.6cm] node[below=1mm,xshift=-3pt] {1.0} (sB1);
\draw[thick,->] (sA1) edge[bend right=30] node[above] {0.7} coordinate[pos=0.1](sA1C1) (s0);
\draw[thick,->] (sA1) edge[bend left=30] node[above] {0.3} coordinate[pos=0.1](sA1C2) (sB1);
\draw[thick,fill=red,draw] pic[draw,angle radius=5mm,"a",angle eccentricity=1.2] {angle = sA1C2--sA1--sA1C1};
\end{scope}

\begin{scope}[myrotate={56}{56}]
\node[round] (sA1) at (2,0) {$0$};
\node[round] (sB1) at (4,0) {$1$};
\draw[thick,->] (s0) -- node[above] {0.8} (sA1);
\draw[thick,->] (s0) edge[bend right=20] node[below] {0.2} coordinate[pos=0.1](sB1C) (sB1);
\draw[thick,fill=red,draw] pic[draw,angle radius=10mm,"b",angle eccentricity=1.2] {angle = sB1C--s0--sA1};
\draw[thick,->] (sB1) edge[loop,out=-20,in=20,distance=0.6cm] node[below=2mm,xshift=-2pt] {1.0} (sB1);
\draw[thick,->] (sA1) edge[bend right=30] node[above] {0.8} coordinate[pos=0.1](sA1C1) (s0);
\draw[thick,->] (sA1) edge[bend left=30] node[above] {0.2} coordinate[pos=0.1](sA1C2) (sB1);
\draw[thick,fill=red,draw] pic[draw,angle radius=5mm,"a",angle eccentricity=1.2] {angle = sA1C2--sA1--sA1C1};
\end{scope}

\begin{scope}[myrotate={126}{126}]
\node[round] (sA1) at (2,0) {\rotatebox{180}{$0$}};
\node[round] (sB1) at (4,0) {\rotatebox{180}{$1$}};
\draw[thick,->] (s0) -- node[above] {\rotatebox{180}{0.7}} (sA1);
\draw[thick,->] (s0) edge[bend right=20] node[below] {\rotatebox{180}{0.3}} coordinate[pos=0.1](sB1C) (sB1);
\draw[thick,fill=red,draw] pic[draw,angle radius=10mm,"\rotatebox{180}{c}",angle eccentricity=1.2] {angle = sB1C--s0--sA1};
\draw[thick,->] (sB1) edge[loop,out=-20,in=20,distance=0.6cm] node[left=5mm,xshift=-2pt] {\rotatebox{180}{1.0}} (sB1);
\draw[thick,->] (sA1) edge[bend right=30] node[above] {\rotatebox{180}{0.9}} coordinate[pos=0.1](sA1C1) (s0);
\draw[thick,->] (sA1) edge[bend left=30] node[above] {\rotatebox{180}{0.1}} coordinate[pos=0.1](sA1C2) (sB1);
\draw[thick,fill=red,draw] pic[draw,angle radius=5mm,"\rotatebox{180}{a}",angle eccentricity=1.2] {angle = sA1C2--sA1--sA1C1};
\end{scope}

\begin{scope}[myrotate={198}{198}]
\draw[fill=black] (4,0.8) circle (0.07cm);
\node[round] (sA1) at (4,0) {\rotatebox{180}{$0$}};
\draw[thick,->] (4,0.8) -- (sA1);
\node[round] (sB1) at (2,0) {\rotatebox{180}{$1$}};
\draw[thick,->] (sB1) edge[loop,out=160,in=200,distance=0.6cm] node[left] {\rotatebox{180}{$1.0$}} (sB1);
\draw[thick,->] (sA1) edge[bend left=30] node[below] {\rotatebox{180}{0.4}} coordinate[pos=0.1](sA1C1) (sB1);
\draw[thick,->] (sA1) edge[bend right=20] node[above] {\rotatebox{180}{0.6}} coordinate[pos=0.1](sA1C2) (s0);
\draw[thick,fill=red,draw] pic[draw,angle radius=5mm,"\rotatebox{180}{a}",angle eccentricity=1.2] {angle = sA1C2--sA1--sA1C1};
\end{scope}

\begin{scope}[myrotate={270}{0}]
\node[round] (sA1) at (2,0) {$1$};
\node[round] (sB1) at (4,0) {$2$};
\draw[thick,->] (s0) edge[bend left=20] node[right] {$0.6$} coordinate[pos=0.1](sA1D1) (sB1);
\draw[thick,->] (s0) edge node[left] {0.4} coordinate[pos=0.1](sA1D2) (sA1);

\draw[thick,fill=red,draw] pic[draw,angle radius=10mm,"d",angle eccentricity=1.2] {angle = sA1D2--s0--sA1D1};
\end{scope}
\draw[thick,->] (sA1) edge[loop,out=165,in=195,distance=0.6cm] node[left] {$1.0$} (sA1);
\draw[thick,->] (sB1) edge[loop,out=-15,in=15,distance=0.6cm] node[right] {$1.0$} (sB1);

\end{tikzpicture}
\caption{An example parity MDP that admits a $0.54$-risk-averse finite-memory policy, but no such positional policy. All states are labeled by their colors.}
\label{fig:exampleMDP}
\end{figure}

\begin{example}
Figure~\ref{fig:exampleMDP} shows a parity MDP. It has four colors, and all states with color $1$ are sink states, i.e., from which no possible goal state can be reached.
The center state has the highest and odd color, so it may only be visited finitely often. 
Any policy cannot avoid either ending up in a sink state or visiting the middle state at least every second step, unless eventually action $d$ is chosen by the policy. 
If the policy chooses action $a$ in the initial state, and then immediately chooses $d$, it reaches the state with color $2$ with a probability of $0.6 \cdot 0.6 = 0.36$. The resulting policy is thus $0.36$-risk averse. However, there exists a better policy: when the state with color $3$ is visited for the first time, action $a$ should be taken, then action $b$, $c$, and finally action $d$. By declaring all color $0$ states to be goal states, the resulting policy then has a risk averseness level of $\min(0.6 \cdot 0.9, 0.7 \cdot 0.8, 0.8 \cdot 0.7, 0.9 \cdot 0.6) = 0.54$. Thus, the best next action in the state with color $3$ depends on the history of the trace.
While this example only shows that memory is needed in optimally risk-averse policies, the fact that finite memory suffices follows from the correctness of our algorithm described below.
\end{example}

\subsection{$p$-risk-averse policy computation}
\label{subsec:pRiskAversePolicyComputation}

Let us assume that $p$ is fixed and that we want to compute a $p$-risk-averse MDP control policy. The algorithm that we describe in this section computes the set of states from which a $p$-risk-averse policy exists. We call such states \newterm{winning}. The policies computed sometimes make use of non-winning states, which may be counter-intuitive at first. 
Figure~\ref{fig:nonWinningExample} shows an example MDP where this is the case: from state $q_1$, the probability of reaching a next goal state is only $0.2$, but the optimal $0.68$-risk-averse policy from the initial state requires that even after reaching $q_1$, state $q_2$ is labeled as being a goal state if it is subsequently reached.

Whenever a goal state is reached, the only information about the history of the trace that may need to be retained is (1) how often the goal color may still be decreased before the limit of $k$ is reached, and (2) what the current goal color is. This follows from the fact that the computation of probabilities is \emph{reset} at goal states. Our algorithm makes use of this fact by planning policies from goal state to goal state(s).
It iterates over all possible value combinations for the current goal color and the number of remaining goal color reductions. 

\begin{definition}
We say that a state $q$ is $(k,c)$-winning (for some fixed risk averseness level $p$) if there exists a $p$-risk-averse policy $f$ from $q$ as initial state with labels $l$ and $l'$ such that $l(\epsilon)=c$ and along all traces of the policy, goal colors are never decreased more than $k$ times. We call such policies $p$-$(k,c)$-risk-averse.
\label{def:ckWinning}
\end{definition}

\begin{corollary}
Some parity MDP admits a $p$-risk-averse policy if the initial state is $(k,c)$-winning for some values of $c \in \NN$ and $k \in \NN$.
\end{corollary}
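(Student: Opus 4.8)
The statement is essentially a piece of bookkeeping, so the plan is simply to unfold Definitions~\ref{def:mainProblem} and~\ref{def:ckWinning} and verify that nothing is lost in passing from one to the other. Assume the initial state $s_0$ of the parity MDP is $(k,c)$-winning for some $k,c \in \NN$. By Definition~\ref{def:ckWinning} this gives us a policy $f$ together with labelings $l$ and $l'$ that is $p$-risk-averse when $s_0$ is taken as the initial state, with the value of $l$ on the initial-state trace equal to $c$ and with the number of goal-color decreases bounded by $k$ along every trace. Since the fixed initial state of the parity MDP is exactly this $s_0$, the claim is that the same triple $(f,l,l')$ already witnesses, via Definition~\ref{def:mainProblem}, that the parity MDP admits a $p$-risk-averse policy.

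To justify this I would go through the four bullets of Definition~\ref{def:mainProblem} in turn. The first bullet asks for a uniform bound on the number of indices at which the goal color $l$ strictly decreases along any infinite trace; this is supplied directly by the $k$ from $(k,c)$-winningness (and the existential quantifier over $k$ in Definition~\ref{def:mainProblem} is, of course, weaker than having a concrete witness). The second and third bullets are purely \emph{trace-local}: they constrain $l$, $l'$ and $C$ only at a single trace $t_0 \ldots t_n$ and its last state $t_n$, so they carry over verbatim from the assumption that $f$ is $p$-risk-averse. The fourth bullet is the probabilistic one, requiring that from every trace $t$ with $l'(t) = \TRUE$, and also from the initial-state trace $t = s_0$, the probability in the induced Markov chain $\mathcal{C}'$ of reaching a subsequent $l'$-true trace is at least $p$. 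The case $l'(t) = \TRUE$ is inherited directly, and the case $t = s_0$ is exactly the extra clause that makes ``$p$-risk-averse with $s_0$ as initial state'' stronger than merely ``$p$-risk-averse restarting from some goal state,'' which is what Definition~\ref{def:ckWinning} hands us. All four bullets therefore hold.

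There is no genuine obstacle here; the only two points that warrant a sentence of care are (i) confirming that the ``$q$ as initial state'' phrasing of Definition~\ref{def:ckWinning} really does line up with the globally fixed $s_0$, in particular that clause (b), $t = s_0$, of the fourth bullet is covered, and (ii) observing that $c$ is automatically even, since $p$-risk-averseness already forces every value of $l$ to be even, so reusing the triple $(f,l,l')$ is legitimate. I would also note in passing that the converse is equally immediate --- from any $p$-risk-averse policy for the MDP together with its bound $k$, setting $c$ to be the value of $l$ on the initial-state trace shows $s_0$ is $(k,c)$-winning --- so the corollary is in fact an equivalence, and it is this equivalence that licenses reducing optimal risk-averse policy synthesis (Definition~\ref{def:optimalRiskAversePolicyComputation}) to the problem of computing $(k,c)$-winning states.
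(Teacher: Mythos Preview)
Your proposal is correct and follows exactly the paper's approach: the paper's proof is the single line ``Follows directly from Definitions~\ref{def:mainProblem} and~\ref{def:ckWinning},'' and your argument is a careful unpacking of precisely that. If anything, you do slightly more work than needed, since Definition~\ref{def:ckWinning} already asserts the existence of a $p$-risk-averse policy from $q$, so once $q = s_0$ there is nothing further to check; your bullet-by-bullet verification is sound but redundant.
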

\begin{proof}
Follows directly from Definitions~\ref{def:mainProblem} and \ref{def:ckWinning}.
\end{proof}

This corollary allows us to frame the search for a $p$-risk-averse policy as an iterative process, which we base on the following lemma. Let in the following $c_\mathit{maxEven}$ be the least \emph{even} upper bound on the colors occurring in the parity MDP.

\begin{lemma}
A state $q$ is winning for some values of $(k,c)$ with $c \leq c_\mathit{maxEven}$ and even $c$ if and only if there exists a policy such that with probability $p$ eventually either:
\begin{itemize}
\item some even-colored state $q'$ is visited that is winning for $(k-1,0)$, or
\item some even-colored state $q'$ with $C(q') = c'$ for $c' \geq c$ is visited that is winning for $(k,c')$ while no odd color $\geq c'$ is visited along the way to $q'$.
\end{itemize}
\label{lem:ckLemma}
\end{lemma}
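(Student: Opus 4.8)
The plan is to prove both directions of the equivalence by explicitly manipulating the labellings $l$ and $l'$ that witness the $p$-$(k,c)$-risk-averseness, exploiting the "reset at goal states" idea already articulated informally in the text.

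\textbf{The ``only if'' direction.} Suppose $q$ is $(k,c)$-winning with witnessing policy $f$ and labellings $l, l'$ with $l(\epsilon) = c$. I would use exactly this policy $f$ for the claimed statement. By Definition~\ref{def:mainProblem}, the probability measure in the induced Markov chain of reaching some $t'$ with $l'(t') = \TRUE$ from $q$ is at least $p$. So it suffices to show that every such first reached goal state $q'$ (more precisely, every finite trace $q\,t'_1\ldots t'_m$ with $l'$ true there and $l'$ false at all proper prefixes) falls into one of the two cases. Since $l'$ true forces $C(q')$ even and $C(q') \geq l(q\,t'_1\ldots t'_m)$, and since $l$ is non-increasing along the trace except at the at-most-$k$ ``drop'' positions, there are two subcases. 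If $l$ was decreased at least once along the way from $q$ to $q'$, then from $q'$ onward the policy is $p$-$(k-1,c'')$-risk-averse for $c'' = l(q\,t'_1\ldots t'_m)$; I then note that the sub-policy can be further adjusted to set the goal color down to $0$ at $q'$ — this is legitimate because decreasing the goal color only happens finitely often and lowering it to $0$ uses up one of the remaining $k-1$ drops only if it is a strict decrease; one has to be slightly careful here and argue that $(k-1,c'')$-winning implies $(k-1,0)$-winning when $C(q')$ is even, by resetting $l$ to $0$ at the root, which costs at most the one drop already accounted for. In the other subcase, $l$ was never decreased between $q$ and $q'$, so the goal color at $q'$ is still some $c' = l(q\,t'_1\ldots t'_m) \geq c$ (equality, in fact, since $l(\epsilon)=c$ and no drops), the suffix policy is $p$-$(k,c')$-risk-averse, and the third and fourth bullets of Definition~\ref{def:mainProblem} guarantee no odd color $\geq c'$ (indeed $\geq l$ at each point, which along this drop-free segment is $\geq c'$) was seen along the way, because an odd color $C(t_i)$ always forces $l(\ldots t_i) > C(t_i)$ yet $l$ stays $\geq c'$. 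That yields the second bullet of the lemma.

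\textbf{The ``if'' direction.} Conversely, assume a policy $g$ is given that with probability $\geq p$ reaches (as a first event) an even-colored state $q'$ that is either $(k-1,0)$-winning or is $(k,c')$-winning for some $c' \geq c$ via a path on which no odd color $\geq c'$ appears. I construct a $p$-$(k,c)$-risk-averse policy for $q$ by concatenation: follow $g$ until such a $q'$ is first hit, then switch to the witnessing $p$-$(k-1,0)$- resp.\ $p$-$(k,c')$-risk-averse policy rooted at $q'$ (on the remaining prefix-history, which is finite, so the glued object is still a well-defined policy $S^* \to \mathcal{P}(A)$). The labellings are defined piecewise: on the initial segment up to $q'$ set $l \equiv c$ and $l' \equiv \FALSE$ except at $q'$ itself, where I look at which case holds — in the first case I set $l'(q')$ according to the inner policy with goal color $0$ (a strict drop from $c$ to $0$, consuming exactly one of the $k$ allowed drops, leaving $k-1$, matching $(k-1,0)$); in the second case $l'(q')$ may or may not be true depending on the inner policy, $l(q') = c' \geq c$, which is not a decrease. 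I must check all four bullets of Definition~\ref{def:mainProblem} for the glued policy: the $k$-bound on drops holds because the prefix contributes at most one drop and the suffix contributes at most $k-1$ (or $k$) accordingly; evenness of $l$ is immediate; the conditions relating $C(t_n)$, $l$, $l'$ on the prefix hold because on the initial segment $C$ never takes an odd value $\geq c$ (in the second case by hypothesis; in the first case I additionally need that the no-odd-$\geq\!c'$ condition is vacuous or can be arranged — here is a subtlety, see below), and on the suffix they hold by the inner witness; finally the reachability-with-probability-$p$ requirement at $s_0=q$ is exactly the hypothesis, and at every other $l'$-true state it is inherited from the inner policy.

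\textbf{Main obstacle.} The delicate point — and the one I expect to spend the most care on — is the interplay between the bound $k$ on goal-color decreases and the requirement, in bullets three and four of Definition~\ref{def:mainProblem}, that odd colors strictly below the current goal color be ``absorbed'' by keeping $l$ high. In the ``if'' direction, first case, I am gluing to a $(k-1,0)$-winning suffix, meaning along the prefix I keep $l = c$, but the prefix path to $q'$ might traverse an odd color $\geq c$, which would violate bullet three unless I temporarily raise $l$ — but raising and then lowering $l$ costs drops against the budget $k$. The lemma statement's asymmetry (the first bullet has \emph{no} side condition on colors along the way, the second bullet does) is presumably exactly what lets this work: reaching a $(k-1,0)$-winning state means we can afford to ``restart'' the parity bookkeeping, so transient odd colors on the prefix are harmless provided we account for the color excursions within the $k$ budget; I will need to make precise that any finite prefix touches only finitely many colors and that a single ``reset to $0$'' at $q'$ together with setting $l$ on the prefix to the maximum color seen there suffices, costing at most one net drop. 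Pinning down this accounting — possibly by choosing $l$ on the prefix more cleverly than the constant $c$, e.g.\ as a running maximum that only ever increases on the prefix and drops once at $q'$ — is the crux, and I would isolate it as the one genuinely non-bookkeeping step of the argument.
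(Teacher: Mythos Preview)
Your argument is essentially the paper's, with one structural difference: the paper frames the proof as an induction on $(k,c)$ in the lexicographic order on $(k,-c)$ with base case $(0,c_{\mathit{maxEven}})$, handling the self-referential case $c'=c$ in the $\Leftarrow$ direction by \emph{iterating} the construction from $q'$ rather than gluing in a pre-existing witness, whereas you glue directly with the witnessing policy that exists by Definition~\ref{def:ckWinning}. Your direct route is legitimate and arguably cleaner---since the hypothesis already hands you $(k-1,0)$- or $(k,c')$-winning states, their policies exist by definition and no inductive hypothesis is needed; the paper's inductive framing buys a more explicit recursive description of the resulting policy, which feeds transparently into the fixed-point computation of Algorithm~\ref{algo:complexAlgorithm}.

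Two points need tightening. First, the parenthetical ``equality, in fact'' is wrong: $l$ may strictly increase along a drop-free prefix, so the $l$-value at the first goal can be some $c' > c$. This does not break your argument (you only need $c' \geq c$), but note that the lemma literally requires $c' = C(q')$, while you have only exhibited the suffix as $(k,\hat c)$-winning for $\hat c = l(\ldots q') \leq C(q')$; the paper glosses over the same point, and it is resolved either by reading $c'$ as the $l$-value (which is how Lemma~\ref{lem:reachabilityMDPLemma} actually uses it) or by a short monotonicity observation. Second, and more important, the obstacle you correctly isolate for the first bullet of the $\Leftarrow$ direction---odd colors on the prefix forcing $l$ above $c$---arises equally in the second bullet when $c' > c$: an odd color $d$ with $c \leq d < c'$ on the prefix would violate the third condition of Definition~\ref{def:mainProblem} under your proposed $l \equiv c$. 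Your running-maximum fix handles this case too, since the hypothesis ``no odd color $\geq c'$ along the way'' keeps the running max at most $c'$ and hence no drop is spent at the glue point; just apply it uniformly rather than only in the first case.
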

\begin{proof}
We prove the claim by induction over $(k,c)$ with even $c$. The order of induction that we use is lexicographic in $(k,-1 \cdot c)$.

\textbf{Induction basis:} For the case $(k,c) = (0,c_\mathit{maxEven})$, the only way for a state to be $(k,c)$-winning is for a policy from that state to exist such that with probability at least $p$, a state is eventually visited that has color $c$ and is $(k,c)$-winning again. This is exactly the only condition from the claim that is applicable in this case.

\textbf{Induction step:}
($\Rightarrow$)
Let $f'$ be a policy from $q$ such that on every trace of the policy, the  goal colors decrease at most $k$ times, and let $l(\epsilon)=c$ for the labellings $(l,l')$ assigned to the policy. 
The probability to reach the next goal must be at least $p$ in order for the state to be $(k,c)$-winning. A goal can either have a color of $\geq c$ or a color less than $c$. In the latter case, the goal state must be $(k',c')$-winning for some $c'$ and some $k'<k$. As all such states are also $(k-1,0)$-winning (by definition), this case is covered by cases in the claim.
If a goal with color $c'$ is reached, then either no state with color $\geq c$ is visited along the way and $c'=c$, or alternatively $c'>c$ and no state with color $\geq c'$ is visited along the way. Both cases are covered by the case list in the claim.

($\Leftarrow$)
{Now let $q$ be a state from which a policy to visit some goal state $q'$ with probability $p$ exists.}
State $q'$ can be a $(k,c)$-winning state, but does not need to be one. If on the way to $q'$, a state with an odd color $c' > c$ is visited, this requires that the label function $l'$ of the policy has to be greater than $c'$ on the way from $q$ to $q'$. So for the trace to count towards the probability mass of $p$, state $q'$ needs to be either $(k,c'+x)$-winning (for even $x\geq 2$) or alternatively $(k-1,c')$-winning. Since the set of $(k,c'+x)$-winning states is contained in the $(k,c'+2)$-winning states and the $(k-1,c')$-winning states are a subset of the $(k-1,0)$-winning states (by definition), we can assume, without loss of generality, that a $(k,c'+2)$-winning or $(k-1,0)$-winning state is visited.

We construct the $p$-risk averse policy $f$ with associated labels $(l,l')$ that prove that $q$ is $(k,c)$-winning as follows: we use the policy with the properties from the claim, and switch to the policies that exist by the inductive hypothesis for the states that are $(k-1,0)$-winning or $(k,c'+2)$-winning when the second condition from the claim is used. 
When another $(k,c)$-winning state is visited, we instead continue with a policy constructed from $q'$ in the same way as for $q$. 
The fact that this composition of the policies yields a correct $(k,c)$-winning policy follows by induction: at every policy prefix $t$ with $l'(t)=\TRUE$ or $t = \epsilon$ such that no transition to a $(k-1,0)$-winning or $(k,c'+2)$-winning has yet occurred, we know that the policy reaches some next goal state with probability at least $p$. For the other goal states, the correctness follows from the inductive hypothesis and the fact that after transitions to $(k,c'+2)$-winning or $(k-1,0)$-winning goal states, the existing $p$-risk-averse policies can be used from there. If no such other goal state is reached or until such a goal state is reached, the construction ensures that the goal states otherwise reached are $(k,c)$-winning, and no odd color higher than $c$ is reached in between two visits to $(k,c)$-winning goal states that are not $(k,c'+2)$-winning or $(k-1,0)$-winning. As this property holds (by induction over the length of the policy prefix) for all visits to goal states, the claim follows.
\end{proof}

The characterization of $(k,c)$-winning states in Lemma~\ref{lem:ckLemma} allows us to compute the $(k,c)$-winning states using traditional MDP policy computation algorithms.

\begin{lemma}
Let $\mathcal{M} = (S,A,\Sigma,P,C,s_0)$ be a parity MDP, $S_{k,c} \subseteq S$ be the $(k,c)$-winning states, $S_{k,c+2}, \allowbreak \ldots, \allowbreak S_{k,c_\mathit{maxEven}}$ be the $(k,c+2)$-winning to $(k,c_\mathit{maxEven})$-winnings states, and $S_{k-1,0}$ be the states that are $(k-1,0)$-winning (for some value of $p$). We can compute a reachability MDP $\mathcal{M'}$ with $|S| \times |\{c, c+2, \ldots, c_\mathit{maxEven}\}|$ many states in which the value of any state $(q,c)$ is $\geq p$ if and only if $q$ is a $(k,c)$-winning state.
\label{lem:reachabilityMDPLemma}
\end{lemma}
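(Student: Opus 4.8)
The plan is to recast the characterization of Lemma~\ref{lem:ckLemma} as a reachability problem by annotating every MDP state with one extra piece of bookkeeping --- the \emph{current goal-color threshold}, i.e., the least even color that is still permitted to serve as the next goal along the excursion we are currently planning. Concretely, I would let the reachability MDP $\mathcal{M}'$ have state set $S' = S \times \{c, c+2, \ldots, c_\mathit{maxEven}\}$ (immediately giving the claimed count $|S| \cdot |\{c, c+2, \ldots, c_\mathit{maxEven}\}|$), keep the action set $A$, and take an arbitrary labeling and initial state, since the claim is about \emph{all} state values. A state $(q,\hat c)$ is meant to say ``we are in $q$ and, since the last goal (or the start of the plan), no odd color $\geq \hat c$ has been seen''. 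The transitions of $\mathcal{M}'$ copy those of $\mathcal{M}$: from $(q,\hat c)$ under an action $a$ enabled in $q$ we go to $(q',\hat c')$ with probability $P(q,a)(q')$, where $\hat c' = C(q')+1$ if $C(q')$ is odd and $C(q') \geq \hat c$, and $\hat c' = \hat c$ otherwise. Note $\hat c' \geq \hat c$ (the threshold is monotone along every run) and $\hat c' \leq c_\mathit{maxEven}$ (odd colors are strictly below $c_\mathit{maxEven}$), so $\hat c'$ stays in the index set. A state $(q,\hat c)$ with $C(q)$ odd and $C(q) \geq \hat c$ is made an absorbing non-goal state, mirroring the third bullet of Definition~\ref{def:mainProblem}; one observes such states are never entered via a transition, so they occur only as start states. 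Finally, $g(q,\hat c) = 1$ exactly for the \emph{goal configurations}: $q$ is even-colored and either $q \in S_{k-1,0}$, or $C(q) \geq \hat c$ and $q \in S_{k,C(q)}$ --- the two disjuncts matching the two bullets of Lemma~\ref{lem:ckLemma}, with the side condition ``no odd color $\geq c'$ on the way to $q'$'' encoded by the clause $C(q) \geq \hat c$ together with the threshold update.

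The correctness statement then splits into two inclusions, each translating directly through Lemma~\ref{lem:ckLemma}. The key structural observation is that the threshold component is a deterministic function of the run, so policies of $\mathcal{M}$ from $q$ and policies of $\mathcal{M}'$ from $(q,c)$ correspond bijectively and preserve the path measure; along such a run the threshold at a state $q'$ equals $\max(c,\,1 + (\text{largest odd color seen so far}))$, whence $C(q') \geq \hat c'$ holds precisely when $C(q') \geq c$ and no odd color $\geq C(q')$ has been visited. Using this, the event ``$\mathcal{M}'$ eventually reaches a goal state'' coincides --- under the correspondence, and using that this event is monotone so that stopping at the first goal configuration is without loss of generality --- with the event of Lemma~\ref{lem:ckLemma} that ``$\mathcal{M}$ eventually visits a state matching one of the two bullets''. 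Hence the optimal reachability value of $(q,c)$ is $\geq p$ iff a policy of $\mathcal{M}$ with the property in Lemma~\ref{lem:ckLemma} exists, which by that lemma is equivalent to $q \in S_{k,c}$. One side case remains: if $(q,c)$ is itself a goal configuration its value is $1$, so one must separately verify that such a $q$ is $(k,c)$-winning. This holds because winning is monotone in the goal color (if $C(q) \geq c$ and $q \in S_{k,C(q)}$, one may lower $l(\epsilon)$ from $C(q)$ to $c$ without violating any condition of Definition~\ref{def:mainProblem}) and in the decrease budget (if $q \in S_{k-1,0}$, relabeling $l(\epsilon) := c$ in the $(k-1,0)$-witness yields a $(k,c)$-witness at the cost of at most one extra goal-color decrease at the start); so marking these states as goals of $\mathcal{M}'$ is sound in both directions.

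The main obstacle is not a deep argument but getting the bookkeeping exactly right at the boundary cases. Three points need care: (i) Definition~\ref{def:mainProblem} asks only to reach the \emph{next} goal, strictly after the current state, which is why goal configurations must still carry outgoing transitions and why the ``$(q,c)$ is a goal configuration'' case has to be handled via the monotonicity remark above rather than dismissed; (ii) the absorbing states --- those whose color is odd and at least the current threshold --- must be ruled non-winning, and one should check that they arise in $\mathcal{M}'$ only as start states, so they do not interfere with the rest of the construction; and (iii) one must verify that forcing the threshold to jump strictly past every odd color it meets is exactly the invariant corresponding to ``no odd color $\geq c'$ seen on the way to $q'$''. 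Once these are pinned down, correctness reduces to quoting Lemma~\ref{lem:ckLemma} in both directions, and the state count is immediate.
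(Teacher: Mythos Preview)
Your approach is essentially the same as the paper's: build the reachability MDP on $S \times \{c, c+2, \ldots, c_{\mathit{maxEven}}\}$ by tracking the least even color still admissible as the next goal (the paper phrases this as tracking the highest odd color seen so far), declare goal states according to the two bullets of Lemma~\ref{lem:ckLemma}, and then invoke that lemma for correctness. Your write-up is considerably more careful about edge cases---keeping the threshold even, handling the case where the start configuration is already a goal via monotonicity, and isolating the absorbing odd-colored start states---than the paper's two-line justification, but the construction and the argument are the same.
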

\begin{proof}
We can construct $\mathcal{M}' = (S',A,\Sigma,P',g,s_0)$ as follows:
{
\begin{align*}
S' & = S \times \{c, c+2, \ldots, c_\mathit{maxEven}\} \\
P((s,\tilde c),a)((s',\tilde c')) & = \begin{cases}
P(s,a)(s') & \text{if } C(s') \text{ is odd and } 
\tilde c' = \max(\tilde c, C(s')) \\
P(s,a)(s') & \text{if } C(s') \text{ is even and } 
\tilde c' = \tilde c \\
0 & \text{else}
\end{cases} \\
& \quad \quad \text{for all} (s,\tilde c), (s',\tilde c') \in S', a \in A
\end{align*}\begin{align*}
g((s,\tilde c)) & = \begin{cases} 1 & \text{if } \tilde c = c, s \in S_{k,c}, C(s)\geq \tilde c, C(s)\text{ is even} \\
1 & \text{if } s \in S_{k,\tilde c} \text{ or } s \in S_{k-1,0},\text{ and }C(s)\text{ is even}\\
0 & \text{else}
\end{cases} \\
& \quad \quad \text{for all} (s,\tilde c) \in S'
\end{align*}
} %
The MDP has the stated properties by the facts that (1) it keeps track of the highest color visited along a trace so far, and (2) it induces a payoff of $1$ exactly for the states that are possible goal states.
\end{proof}

Optimal policy computation for a reachability MDP can be performed by standard policy iteration or value iteration algorithms. Until now, the definition of the reachability MDP in Lemma~\ref{lem:reachabilityMDPLemma} is somewhat recursive: in order to determine which states are $(k,c)$-winning, we have to already know the $(k,c)$-winning states. The characterization from Lemma~\ref{lem:ckLemma} however allows us to compute it with the approach from Lemma~\ref{lem:reachabilityMDPLemma}. What we are actually searching for is the \emph{largest} set of states $S_\mathit{k,c}$ that the construction from Lemma~\ref{lem:reachabilityMDPLemma} maps to itself; any state set that is smaller misses some states that are $(k,c)$-winning by the characterization from Lemma~\ref{lem:ckLemma}, and by the same lemma, any set that is larger contains some state that is not $(k,c)$-winning. So computing the \emph{greatest fixpoint} over the states $Q_\mathit{k,c}$ allows to find the $(k,c)$-winning states, provided that the $(k,c+2)$-winning to $(k,c_\mathit{maxEven})$-winning and $(k-1,0)$-winning states are known. By iterating over the possible values of $k$ and $c$, we can thus compute the sets $S_{k,c}$ in a bottom-up fashion, as shown in Algorithm~\ref{algo:complexAlgorithm}. 

\begin{algorithm}
\begin{algorithmic}[1]
\Function{ComputeRAPolicy}{$\mathcal{M}, p$}
\State $S_{k-1} \gets \emptyset$
\While{fixed point of $S_k$ has not been reached}
\For{$c \in \{c_\mathit{maxEven},c_\mathit{maxEven}-2, \ldots, 0\}$}
\State $S_{k}[c] \gets S$
\While{fixed point of $S_k[c]$ has not been reached}
\State $\mathcal{M}' = \Call{ConstructionFromLemma2}{c, \allowbreak S_{k}[c], \allowbreak \ldots, \allowbreak S_{k}[c_\mathit{maxEven}],\allowbreak S_{k-1}}$
\State $V \gets \Call{ComputeStateValues}{\mathcal{M}'}$
\State $S_k[c] \gets \{s \in S \mid V((s,c))\geq p\}$ \label{algoline:computerSkc}
\EndWhile
\EndFor
\State $S_{k-1} \gets S_{k}[0]$
\EndWhile
\State \Return $s_0 \in S_{k-1}$
\EndFunction
\end{algorithmic}
\caption{Algorithm to compute if a parity MDP $\mathcal{M}$ admits a $p$-risk-averse policy.}
\label{algo:complexAlgorithm}
\end{algorithm}

The algorithm calls the external function \textsc{ComputeStateValues} to solve the reachability MDPs obtained by the construction in Lemma~\ref{lem:reachabilityMDPLemma}, which can be a value or policy iteration algorithm. 
Extending Algorithm~\ref{algo:complexAlgorithm} to also compute a policy is simple: without loss of generality, optimal reachability MDP policies are positional, and we can stitch these policies together in the order in which they are found by the algorithm.
Since the algorithm performs only a finite number of iterations over $k$ and $c$, the resulting policy is finite-state.

\begin{remark}
\label{remark:algorithmSimplication}
To speed up Algorithm~\ref{algo:complexAlgorithm}, we can simplify the reachability MDP construction of Lemma~\ref{lem:reachabilityMDPLemma}: instead of keeping track of the maximum odd color seen along a trace so far (in excess of $c$), we can alternatively keep track of whether an odd color greater than $c$ has been seen so far, and only consider switching to a $(k-1,0)$-winning goal state in that case. While the number of loop iterations of the algorithm until all positions that admit a $p$-risk-averse policy has been found can be higher with this modification, the reachability MDPs are typically smaller (as they have a size of at most $2 \cdot | S |$ then), which speeds up the value or policy iteration process for solving them. 
\end{remark}

\subsection{Maximally risk-averse policy computation}
\label{subsec:binarySearch}

In the previous subsection, we gave an algorithm to obtain $p$-risk-averse policies for a given $p$ whenever they exist. In order to compute optimally risk-averse policies, we can apply a \emph{bisection search}, which is the continuous-domain version of binary search, to find the highest value $p$ such that a $p$-risk-averse policy exists.

Since $p$ is a continuous value, this process has no natural termination point, however. For all practical means, it makes sense to define a cut-off value for the search such that if the difference between known upper and lower bounds on the risk-averseness level of the optimal policy is below the cut-off, the search process terminates with the best policy found until then. Defining a cut-off point is also motivated by practical means: most MDP solving algorithms run with a bounded precision, which leads to rounding errors. This makes it difficult to solve the problem given in Definition~\ref{def:optimalRiskAversePolicyComputation} in the strict sense.

However, under the assumption that the probabilities computed by function \textsc{ComputeStateValues} are exact, Algorithm~\ref{algo:complexAlgorithm} can be modified in order to allow finding a maximally risk-averse policy. For this, line~\ref{algoline:computerSkc} of the algorithm needs to be replaced by $S_k[c] \gets \{s \in S \mid V((s,c)) > p\}$. The algorithm then checks if a $p'$-risk-averse policy for $p'>p$ exists. Furthermore, after every call to \textsc{ComputeStateValues}, we let the algorithm also compute $\mathit{lb} := \min \{V(s) \mid s \in S, V((s,c))> p\}$. The least of these $\mathit{lb}$ values represents a lower bound on the $p$-risk averseness of the policy actually computed. Let this value be named $\mathit{lb}_\mathit{min}$.

We can now perform an iterative search process for the optimally risk-averse policy as follows: starting with $p=0$, we search for a $p'$-risk-averse policy for $p'>p$ using the modified version of Algorithm~\ref{algo:complexAlgorithm}. If we find one, we update $p$ to $\mathit{lb}_\mathit{min}$ and continue with the search. Otherwise, the previously found policy is an optimally risk-averse policy.

To see why this process solves the problem, note that whenever $p$ is increased, at least one state is removed from $S_k[c]$ in some iteration of the outermost while loop. While the state may be added to $S_k[c]$ \emph{later} in the process, increasing the value of $p$ can only push states to be found later in the search process of Algorithm~\ref{algo:complexAlgorithm}. When delaying the addition of states to $S_k[c]$, at some point, there will be one execution of the outer while loop of Algorithm~\ref{algo:complexAlgorithm} in which no additional states are found. Since the algorithm will terminate without finding a policy in this case, by the correctness of the algorithm, we can terminate the search at that point, and the policy found last is optimally risk-averse.

\section{Experiments}

We implemented the $p$-risk-averse policy computation approach in a prototype tool written in \texttt{C++} that is called \texttt{ramps}. The tool uses the simplification from Remark~\ref{remark:algorithmSimplication} and employs value iteration to compute policies for the reachability MDPs analyzed in Algorithm~\ref{algo:complexAlgorithm}. Bisection search with a cut-off value of $0.01$ (i.e., 1 percent) is used to computed close-to-optimal risk-averse policies. We configured the value iteration processes to terminate when the sum of updates to the state values falls below $0.05$.
Value iteration is performed in a parallelized way using the \texttt{openmp} library. All computation times reported in the following were taken on an Intel i5-4200U computer with 1.60\,GHz clock rate and 4GB of RAM, utilizing 2 physical processor cores, each with two virtual hyper-threaded cores that are made use of for value iteration. The \texttt{ramps} tool is available under the GPLv3 open source license from \texttt{https://github.com/progirep/ramps}.

\subsection{Single-robot control}
{In the first experiment, we consider the setting from Example~\ref{example:motivation}. The \texttt{ramps} tool needs} 30 minutes and 11 seconds (95m57s of single-processor time) to compute a $0.890689$-risk-averse policy with 388329 states. A simulation of it, available as a video on \texttt{https://progirep.github.\allowbreak{}io/\allowbreak{}ramps}, shows that the robot performs the task encoded into the parity automaton until it crashes. Visiting the regions in the middle in the correct order seems to be relatively easy for the policy. In order to reach the regions on the left and on the right in a risk-averse way, the robot often circles many times before it has the right approach angle and position to travel through one of the gaps next to the static obstacles.

\subsection{Multi-robot control}
As a second example, we considered a multi-robot control scenario, which we depict in Figure~\ref{fig:exampleworkspace2}. This time, we have two robots without complex dynamics: in each step, they can either move left, right, up or down by one cell, or choose not to move at all. If a robot chooses to move, there is an 8 percent chance that it moves into a different direction than chosen (i.e., $8/3$ percent per remaining direction). As in the first example, crashing into an obstacle or into the workspace boundaries leads to a transition to an error state in the MDP.
A robot crashing into the other robot also leads to the error state.
 
The robots can also carry an item. For this, they have to jointly perform a pickup operation while standing left and right, respectively, of the pickup region $r_1$. While they maintain a horizontal distance of $2$, they can continue carrying the item. The item is lost if there is a deviation in the distance. At region $r_2$, they can also drop the item. They cannot crash into each other while carrying an item (as it acts like a spacer). The MDP has 12294 states, 307304 state/action pairs, and 2798040 edges. The numbers of state/action pairs and edges are higher than in the first scenario, as each of the two robots has five choices of actions in each step.

The specification is represented as a 5-state parity automaton that encodes that (1) infinitely many items shall be delivered from $r_1$ to $r_2$, (2) infinitely often, robot one and two shall visit the top left and top right regions, respectively, and (3) the pickup and dropping regions should never be visited by any robot. 

Computing a $0.599408$-risk-averse policy takes $146.4$ seconds and the simulation (available as a video on \texttt{https://progirep.github.\allowbreak{}io/\allowbreak{}ramps}) shows that again, the policy lets the robots perform their task until at least one of them collides. In case the item is lost during delivery, the two robots just try again immediately. The generated policy has $61509$ states.

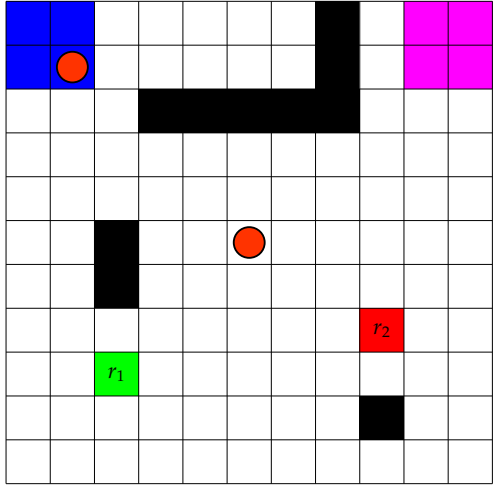
\begin{figure}
\centering \resizebox{0.38\columnwidth}{!}{%
\begin{tikzpicture}[yscale=-1]
\definecolor{mycolor0}{RGB}{255,255,255}
\definecolor{mycolor1}{RGB}{0,0,0}
\definecolor{mycolor2}{RGB}{0,255,0}
\definecolor{mycolor3}{RGB}{255,0,1}
\definecolor{mycolor4}{RGB}{0,0,255}
\definecolor{mycolor5}{RGB}{255,0,255}
\definecolor{mycolor6}{RGB}{0,255,255}
\definecolor{mycolor7}{RGB}{220,220,220}
\definecolor{mycolor8}{RGB}{205,206,179}
\path[fill=mycolor4] (0,0) rectangle (2,1);
\path[fill=mycolor0] (2,0) rectangle (7,1);
\path[fill=mycolor1] (7,0) rectangle (8,1);
\path[fill=mycolor0] (8,0) rectangle (9,1);
\path[fill=mycolor5] (9,0) rectangle (11,1);
\path[fill=mycolor4] (0,1) rectangle (2,2);
\path[fill=mycolor0] (2,1) rectangle (7,2);
\path[fill=mycolor1] (7,1) rectangle (8,2);
\path[fill=mycolor0] (8,1) rectangle (9,2);
\path[fill=mycolor5] (9,1) rectangle (11,2);
\path[fill=mycolor0] (0,2) rectangle (3,3);
\path[fill=mycolor1] (3,2) rectangle (8,3);
\path[fill=mycolor0] (8,2) rectangle (11,3);
\path[fill=mycolor0] (0,3) rectangle (11,4);
\path[fill=mycolor0] (0,4) rectangle (11,5);
\path[fill=mycolor0] (0,5) rectangle (2,6);
\path[fill=mycolor1] (2,5) rectangle (3,6);
\path[fill=mycolor0] (3,5) rectangle (11,6);
\path[fill=mycolor0] (0,6) rectangle (2,7);
\path[fill=mycolor1] (2,6) rectangle (3,7);
\path[fill=mycolor0] (3,6) rectangle (11,7);
\path[fill=mycolor0] (0,7) rectangle (8,8);
\path[fill=mycolor3] (8,7) rectangle (9,8);
\path[fill=mycolor0] (9,7) rectangle (11,8);
\path[fill=mycolor0] (0,8) rectangle (2,9);
\path[fill=mycolor2] (2,8) rectangle (3,9);
\path[fill=mycolor0] (3,8) rectangle (11,9);
\path[fill=mycolor0] (0,9) rectangle (8,10);
\path[fill=mycolor1] (8,9) rectangle (9,10);
\path[fill=mycolor0] (9,9) rectangle (11,10);
\path[fill=mycolor0] (0,10) rectangle (11,11);

\node at (2.5,8.5) {\Large $r_1$};
\node at (8.5,7.5) {\Large $r_2$};

\draw ( 0,0) -- ( 0 , 11 );
\draw ( 1,0) -- ( 1 , 11 );
\draw ( 2,0) -- ( 2 , 11 );
\draw ( 3,0) -- ( 3 , 11 );
\draw ( 4,0) -- ( 4 , 11 );
\draw ( 5,0) -- ( 5 , 11 );
\draw ( 6,0) -- ( 6 , 11 );
\draw ( 7,0) -- ( 7 , 11 );
\draw ( 8,0) -- ( 8 , 11 );
\draw ( 9,0) -- ( 9 , 11 );
\draw ( 10,0) -- ( 10 , 11 );
\draw ( 11,0) -- ( 11 , 11 );
\draw (0, 0) -- ( 11 , 0 );
\draw (0, 1) -- ( 11 , 1 );
\draw (0, 2) -- ( 11 , 2 );
\draw (0, 3) -- ( 11 , 3 );
\draw (0, 4) -- ( 11 , 4 );
\draw (0, 5) -- ( 11 , 5 );
\draw (0, 6) -- ( 11 , 6 );
\draw (0, 7) -- ( 11 , 7 );
\draw (0, 8) -- ( 11 , 8 );
\draw (0, 9) -- ( 11 , 9 );
\draw (0, 10) -- ( 11 , 10 );
\draw (0, 11) -- ( 11 , 11 );
\draw[fill=red!80!yellow,very thick] (5.5,5.5) circle (0.35cm);
\draw[fill=red!80!yellow,very thick] (1.5,1.5) circle (0.35cm);
\end{tikzpicture}
}
\caption{Workspace for the multi-robot example.}
\label{fig:exampleworkspace2}
\end{figure}

\section{Conclusion}

In this paper, we showed how to compute \newterm{risk-averse} policies. A system governed by such a policy works towards the satisfaction of some given $\omega$-regular specification even in probabilistic environments in which almost sure non-satisfaction of the specification cannot be avoided in the long run. Instead of just resigning because the probability mass of the runs of a Markov decision process that satisfy the specification can only be $0$, a $p$-risk averse policy always reaches the respective next \emph{goal state} with a probability of at least $p$ (from the previous goal state). The definition of the problem ensures that the goal states are chosen in a way that faithfully captures the satisfaction of the specification. We assumed that the specification is given as a deterministic parity automaton, but structured logics such as linear time logic (LTL) could also be used, as translations from LTL to parity automata are known \cite{DBLP:journals/lmcs/Piterman07}.\looseness=-1

We intent to extend the approach to the synthesis of strategies in stochastic two-player games in future work. Also, we will explore how to incorporate additional optimization criteria such as mean-average cost into policy generation and if reinforcement learning techniques can be used to successively approximate optimal policies during policy execution.

\section*{Acknowledgements}

\noindent R.~Ehlers was supported by the Institutional Strategy of the University of Bremen, funded by the German Excellence Initiative.
S. Moarref and U. Topcu were partially supported by awards AFRL $FA8650$-$15$-$C$-$2546$, ONR $N000141310778$, ARO $W911NF$-$15$-$1$-$0592$, NSF $1550212$, and DARPA $W911NF$-$16$-$1$-$0001$.

\bibliographystyle{IEEEtran}
\bibliography{bib}

\end{document}